	\newtheorem{thm}{Theorem}
	\newtheorem{lem}{Lemma}
	\newtheorem{rem}{Remark}
	\newtheorem{mdef}{Definition}
	\newtheorem{prop}{Proposition}
	  \let \r = \varrho \let\ph=\varphi    
	\def\e{{\rm e}}
	\def\i{{\rm i}}
	\def\eps{{\varepsilon}}
	\def\Id{{\rm Id}}
	\def\op{{\rm op}}
	\def\opaw{{\rm op}_{\rm aw}}
	\def\R{\mathbb R}
	\def\Rd{{\mathbb{R}^d}}
	\def\Rdd{{\mathbb{R}^{2d}}}
	\def\C{\mathbb C}
	\def\N{\mathbb N}
	\def\W{{\mathcal W}}
	\def\dd{{\rm{d}}}
	\def\nn{\nonumber}
	\def\({\left(} \def\){\right)}  
	   \def\lk{\,\left[ \,} \def\rk{\,\right] \,} 
	   \def\lw{\left\langle} \def\rw{\right\rangle}
\begin{document}

\begin{frontmatter}

\author{Johannes Keller} 
\address{Zentrum Mathematik, Technische Universität München, Boltzmannstr. 3, 85748 Garching bei München}
\title{The spectrogram expansion of Wigner functions}

\begin{abstract}
Wigner functions generically attain negative values and hence are not probability densities.
We prove an asymptotic expansion of Wigner functions in terms of Hermite spectrograms, which
are probability densities. The expansion provides exact formulas for the quantum expectations of polynomial
observables. In the high frequency regime it allows to approximate quantum expectation values up
to any order of accuracy in the high frequency parameter. We present a Markov Chain Monte Carlo method
to sample from the new densities and illustrate our findings by numerical experiments.
\end{abstract}

\begin{keyword}
Wigner function,  spectrogram, expectation value, phase space approximation
\MSC[2010] 	81-08, 
81S30, 
34E05 
\end{keyword}
\end{frontmatter}


\section{Introduction}

Highly oscillatory functions  $\psi \in L^2(\Rd)$, $d\geq 1$, play a prominent role in many areas of science, including
quantum molecular dynamics, wave mechanics, and quantum optics. 
The semiclassical analysis and algorithmic simulation
of such systems often requires a represention of  $\psi$ on the classical phase
space $T^*\Rd \cong\Rdd$. In this paper we construct novel phase space representations that are well-suited
for numerical sampling purposes.
 
As usual, we assume that $\psi$ is $L^2$-normalized
and oscillates with frequencies of size $O(\eps^{-1})$, where $0<\eps \ll 1$ is a small parameter.  
Then, representing $\psi$ via its Wigner transform
\begin{equation}\label{eq:wigner}
\W_\psi(q,p) = (2\pi\eps)^{-d} \int_\Rd \e^{\i p y/\eps} \psi(q-\tfrac{y}2)\overline\psi(q+\tfrac{y}2)\dd y, \quad (q,p)\in \Rdd,
\end{equation}
facilitates to express expectation values of Weyl quantized operators $\op(a)$
exactly via the weighted phase space integral
\begin{equation}\label{eq:wigner_exp_exact}
\lw \psi ,\op(a)\psi\rw = \int_\Rdd a(z) \W_\psi(z) \dd z;
\end{equation}
see, e.g., \cite[\S9 and \S10.1]{dG11}.
Despite its favorable properties, using Wigner functions has a major drawback for applications:
In chemical physics quantum expectation values are often computed via a Monte Carlo discretization
of~\eqref {eq:wigner_exp_exact}; see~\eqref{eq:wigner_method_exp} and \cite{TW04,KL14}. However,
Wigner functions generically attain negative values  and, hence, are not probability densities.
Consequently, they often cannot be sampled directly, and discretizing~\eqref{eq:wigner_exp_exact} becomes difficult or even unfeasible.

Convolving $\W_\psi$ with another Wigner function results in a so-called spectrogram, which is a
nonnegative function. For a Gaussian wave packet  $g_0$
  centered in the origin, the spectrogram $S^{g_0}_\psi :=  \W_\psi * \W_{g_0}$
is a smooth probability density known as the Husimi function of $\psi$. Since one can
sample from $S^{g_0}_\psi$, it suggests itself for replacing  the Wigner function 
in~\eqref{eq:wigner_exp_exact}.
 However, this heavily deteriorates the results by introducing errors of order $O(\eps)$,
\begin{equation}\label{eq:husimi_expec}
\lw \psi ,\op(a)\psi\rw = \int_\Rdd a(z) S^{g_0}_\psi(z) \dd z + O(\eps),
\end{equation}
see~\cite{KL13}. This is often far from being satisfactory. 

In~\cite{KLO15} we recently introduced a novel phase space density $\mu^2_\psi$, given as
a linear combination of the  Husimi function  $S^{g_0}_\psi$ and spectrograms
 associated with  first order Hermite functions. Using $\mu^2_\psi$ 
 instead of the Husimi function improves the errors in \eqref{eq:husimi_expec}  
 to order $O(\eps^2)$.

It turns out that --- as conjectured in~\cite[\S10.5]{K15} --- the results from~\cite[Theorem 3.2]{KLO15} 
can be generalized in a systematic way. We
provide a procedure to construct spectrogram
approximations with errors of arbitrary order $O(\eps^N)$, $N\in \N$. 
  Our main results are summarized in Theorem~\ref{thm:spec_exp}.
We introduce novel
phase space densities $\mu^N_\psi$ by suitably combining Hermite spectrograms of $\psi$ 
of order less than $N$. Then, using these densities 
gives the approximation
\begin{equation}\label{eq:Nth_order_approx}
\lw \psi ,\op(a)\psi\rw = \int_\Rdd a(z) \mu^N_\psi(z) \dd z + O(\eps^N), \quad N\in \N,
\end{equation}
where the error term vanishes as soon as $a$ is a polynomial of degree less than $2N$.
This approximation is well-suited for computing 
quantum expectations with high accuracy:
One only needs to sample from the densities $\mu^N_\psi$, which are linear combinations
of smooth probability densities.
We provide a Markov chain
Monte Carlo method for the sampling that merely requires quadratures
of inner products of $\psi$ with shifted Hermite functions. 

Our approximation indicates a way to circumvent the sampling problem for Wigner functions and, hence, 
might be useful in various applications.
Moreover, the spectrogram expansion 
provides insight into the structure of Wigner functions that can be employed for developing new characterizations and
approximations of functions in phase space.
An important application of our result lies in quantum molecular dynamics:
 one can approximate the quantum evolution of expectation values by sampling
from the density $\mu_\psi^N$ associated with the initial state and combine it with suitable semiclassical approximations 
for the dynamics; see~\S\ref{sec:quant_dyn} and~\cite{GL14,BR02}.


\subsection{Outline}

After recalling Wigner functions and spectrograms in \S\ref{sec:phase_repr}, 
in \S\ref{sec:spec_expansion} we present our
main results.
The proof is prepared and completed in \S\ref{sec:laplace_laguerre} and \S\ref{sec:proof_main_res}, respectively,
and~\S\ref{sec:example_densities} contains illustrative examples.

In \S\ref{sec:quant_exp} and~\S\ref{sec:MCMC} we explore the application of 
our new density for the computation of 
quantum expectations, and present a Metropolis sampling
method. In~\S\ref{sec:quant_dyn} we briefly discuss applications in quantum dynamics.

Finally, in~\S\ref{sec:accuracy} and~\S\ref{sec:hat}, we 
present numerical experiments that illustrate the validity and applicability
of our results and methods.

\subsection{Related Research}

Spectrograms and combinations of spectrograms 
have been extensively studied in the context of time-frequency analysis, 
e.g. for signal 
reassignment~\cite{AF13}, filtering~\cite{Fl15} and cross-entropy minimization~\cite{LPH94}. 
However, to the best of our knowledge, apart from our preceding work~\cite{KLO15}, there are no results
on the combination of spectrograms for approximating
Wigner functions and expectation values.

Husimi functions have been widely used in the context of quantum optics and quantum dynamics, see, e.g.,~\cite{AMP09,Schl11}~and~\cite[\S2.7]{F89}.
 In~\cite{KL13} one can find second order approximations for the quantum evolution of expectation values with
Husimi functions and corrected operator symbols.

\section{Phase space representations via spectrograms}\label{sec:phase_repr_spec}

\subsection{High frequency functions in phase space}\label{sec:phase_repr}
We start by reviewing several representations of
functions $\psi \in L^2(\Rd)$ by real-valued distributions on phase space;
see also \cite{KLO15} and \cite{dG11} for more details.

The most prominent phase space representation of $\psi$ is given by its Wigner function $\W_\psi$ defined in~\eqref{eq:wigner}. 
It has the property that expectation values of Weyl quantized operators
\begin{equation}\label{eq:weyl_quant}
(\op(a)\psi)(q) = (2\pi \eps)^{-d} \int_\Rdd a(\tfrac12(y+q),p) \e^{\i (q-y)p/\eps} \psi(y) \dd y\, \dd p
\end{equation}
with sufficiently regular symbol $a:\Rdd \to \C$
can be exactly expressed via the weighted phase space integral~\eqref{eq:wigner_exp_exact}.

Whenever $ \W_\psi$ is a probability density,~\eqref{eq:wigner_exp_exact} suggests
to approximate expectation values by means of a Monte Carlo type quadrature, see~\S\ref{sec:quant_exp}.
However, as soon as $\psi$ is not a Gaussian, $\W_\psi$ attains negative values (see~\cite{SC83,J97})
and hence is not a probability density. This imposes severe difficulties for computations, since $ \W_\psi$ cannot be sampled directly.

One can turn $\W_\psi$ into a nonnegative function by convolving it with
another Wigner function. For $\psi \in L^2(\Rd)$ and a Schwartz class window $\phi \in \mathcal{S}(\Rd)$, $\|\psi\|_{L^2} = \|\phi\|_{L^2} = 1$, the convolution
\[
S^\phi_\psi := \W_\psi * \W_\phi: \Rdd \to \R
\]
is a smooth probability density,  as can be deduced from \cite[Proposition 1.42]{F89}.
In time-frequency analysis $S^\phi_\psi$ is called a \emph{spectrogram} of $\psi$;
see, e.g., the introduction in~\cite{F13}. Spectrograms belong to Cohen's class
of phase space distributions; see \cite[\S3.2.1]{F99}.

A popular window function is provided by the Gaussian wave packet 
\begin{equation}\label{eq:gaussian_wp}
g_{(q,p)}(x) = (\pi\eps)^{-d/4} \exp\(- \tfrac{1}{2\eps}|x-q|^2 + \tfrac\i\eps p\cdot (x-\tfrac12q)\), \quad (q,p)\in \Rdd, 
\end{equation} 
centered in the origin $q=p=0$;
see~\eqref{eq:gaussian_wp}. The corresponding spectrogram
\begin{equation}\label{eq:husimi_def}
S_\psi^{g_0}(z) = \int_\Rdd \W_\psi(w) (\pi\eps)^{-d} \e^{|z-w|^2/\eps}~\dd w
\end{equation}
is known as the \emph{Husimi function} of $\psi$, first introduced in~\cite{H40}. 
By~\eqref{eq:wigner_exp_exact} one has
\begin{equation}\label{eq:husimi_exp}
\int_\Rdd a(z) S_\psi^{g_0}(z)  \dd z =\int_\Rdd (\W_{g_0}*a)(z) \W_\psi(z) \dd z =   \lw \psi, \opaw (a) \psi\rw,
\end{equation}
where $\opaw(a) = \op(\W_{g_0}*a)$ is the so-called anti-Wick quantized operator associated with $a$; see \cite[\S2.7]{F89}.

As a more general class of windows, we consider  the eigenfunctions $\{\ph_k\}_{k\in\N^d}\subset L^2(\Rd)$ of the harmonic oscillator
\[
- \tfrac{\eps^2}2\Delta_q + \tfrac12|q|^2, \quad q \in \Rd.
\]
It is well-known that $\ph_k$ is a rescaled multivariate Hermite function and,
in particular,  $\ph_0 = g_0$. The corresponding Wigner functions take the form
\begin{equation}\label{eq:Wigner-Laguerre}
\W_{\ph_k}(z) = (\pi \eps)^{-d} \e^{-|z|^2/\eps} (-1)^{|k|} \prod_{j=1}^d L_{k_j}\(\tfrac2\eps |z_j|^2\)
\end{equation}
where $z=(q,p)\in\Rdd$, $z_j = (q_j,p_j)\in\R^2$, and $L_n$ denotes the $n$th Laguerre polynomial
\begin{equation}\label{eq:laguerre_pol}
L_n(x) = \sum_{j=0}^n {n \choose n-j} \frac{(-x)^j}{j!} , \quad n\in \N, \quad x \in \R;
\end{equation}
see, e.g., \cite[\S1.9]{F89} and~\cite[\S1.3]{Th93}. The Laguerre
connection~\eqref{eq:Wigner-Laguerre} will play a crucial role
in our proof of the spectrogram expansion.

\subsection{The spectrogram expansion}\label{sec:spec_expansion}
In this section we present the core result of our paper, which is
the asymptotic expansion of Wigner functions in terms
of Hermite spectrograms. We start by taking a  closer look on the connection between Weyl and anti-Wick operators.

\begin{lem}\label{lem:weyl--aw}
Let $\eps>0$, $a:\Rdd \to \R$ be a Schwartz function and $N\in \N$. Then, there is a family of Schwartz 
functions $r_N^\eps:\Rdd \to \R$ and a constant $C>0$ independent of $a$ and $\eps$ with 
\[
\sup_{\eps>0}\|\op(r_N^\eps)\|_{L^2\to L^2}<C \sup_{|\alpha|,|\beta|\leq \lceil \tfrac d2 \rceil +1} \| \partial_q^\alpha \partial_p^\beta a^{(2N)}\|_\infty
\]
such that
\[
\op(a) = \opaw\(\sum_{j=0}^{N-1} \frac{(-\eps)^k}{4^k k!}\Delta^k a\) + \eps^{N} \op(r_N^\eps),
\]
where anti-Wick quantization has been defined in~\eqref{eq:husimi_exp}.
\end{lem}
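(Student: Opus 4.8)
The plan is to realise anti-Wick quantization as Weyl quantization composed with a heat semigroup, and then to invert that semigroup by the truncated Taylor series displayed in the statement, controlling the tail by the contractivity of the heat flow. Throughout, $\Delta$ denotes the full Laplacian in the phase space variable $z=(q,p)\in\Rdd$.

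First I would rewrite the defining relation from~\eqref{eq:husimi_exp}, namely $\opaw(b)=\op(\W_{g_0}*b)$, in terms of $\Delta$. Since $\W_{g_0}(z)=(\pi\eps)^{-d}\e^{-|z|^2/\eps}$ is exactly the Euclidean heat kernel on $\Rdd$ at time $t=\eps/4$ (equivalently, its Fourier transform is the multiplier $\e^{-\eps|\xi|^2/4}$), convolution against it is the heat propagator $\e^{(\eps/4)\Delta}$, so that
\[
\opaw(b)=\op\!\left(\e^{(\eps/4)\Delta}\,b\right).
\]
This identification is the conceptual heart of the argument: it recognises the correction series $\sum_{k=0}^{N-1}\frac{(-\eps)^k}{4^k k!}\Delta^k$ as the degree-$(N-1)$ Taylor polynomial of the inverse semigroup $\e^{-(\eps/4)\Delta}$.

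Next I would set $c_N:=\sum_{k=0}^{N-1}\frac{(-\eps)^k}{4^k k!}\Delta^k a$ and apply $\opaw$. Working in the functional calculus of $\Delta$ (equivalently, multiplying Fourier multipliers), the product of $\e^{(\eps/4)\Delta}$ with the truncated series equals the identity minus the propagated tail: writing $R_N(x)=\sum_{k\ge N}\frac{(-x)^k}{k!}$ for the remainder of the exponential series, one obtains $\e^{(\eps/4)\Delta}c_N=a-\e^{(\eps/4)\Delta}R_N((\eps/4)\Delta)a$. Setting $r_N^\eps:=\eps^{-N}\e^{(\eps/4)\Delta}R_N((\eps/4)\Delta)a$ then yields precisely $\op(a)=\opaw(c_N)+\eps^N\op(r_N^\eps)$.

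It remains to show that $r_N^\eps$ is Schwartz and to prove the claimed $\eps$-uniform bound. Here I would use the integral form of the Taylor remainder to write
\[
r_N^\eps=\frac{(-1)^N}{4^N(N-1)!}\int_0^1(1-\sigma)^{N-1}\,\e^{(1-\sigma)(\eps/4)\Delta}\,\Delta^N a\;\dd\sigma .
\]
The point I expect to be the main obstacle is that the naive remainder contains a \emph{backward} heat factor $\e^{-\sigma(\eps/4)\Delta}$, which is unbounded; the integral representation resolves this by absorbing it into the outer propagator, leaving $\e^{(1-\sigma)(\eps/4)\Delta}$ with exponent $(1-\sigma)\ge0$. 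Since the latter is convolution against a probability density, it is an $L^\infty$-contraction uniformly in $\eps$, and it commutes with every $\partial_q^\alpha\partial_p^\beta$ and with $\Delta^N$. This gives $\|\partial_q^\alpha\partial_p^\beta r_N^\eps\|_\infty\le\frac{1}{4^N N!}\,\|\partial_q^\alpha\partial_p^\beta\Delta^N a\|_\infty$ uniformly in $\eps>0$, while the rapid decay of the Fourier multiplier of $r_N^\eps$ shows $r_N^\eps\in\mathcal S(\Rdd)$. Finally I would invoke the Calder\'on--Vaillancourt theorem to bound $\|\op(r_N^\eps)\|_{L^2\to L^2}$ by the sup-norms of derivatives of $r_N^\eps$ of order at most $\lceil\tfrac d2\rceil+1$, checking that its constant can be taken $\eps$-independent in this semiclassical normalisation, and substitute the previous estimate (with $a^{(2N)}=\Delta^N a$, i.e. the order-$2N$ derivatives of $a$) to conclude.
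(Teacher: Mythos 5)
Your argument is correct and complete (modulo the standard semiclassical Calder\'on--Vaillancourt theorem), but it is organised differently from the proof the paper points to. The paper settles the lemma by citing~\cite{KL13} (Lemmas 1 and 2) and~\cite{L10}, where the computation is a pointwise Taylor expansion of $a$ around $z$ inside the Gaussian convolution integral $(\W_{g_0}*a)(z)=(\pi\eps)^{-d}\int_\Rdd a(\zeta)\e^{-|z-\zeta|^2/\eps}\,\dd\zeta$: the odd Gaussian moments vanish, the even moments generate the factors $\eps^k(4^kk!)^{-1}\Delta^k$, and the error is controlled through the Taylor remainder of $a$. You instead work at the level of operators on symbols: identifying convolution with $\W_{g_0}$ as the heat propagator $\e^{(\eps/4)\Delta}$ on $\Rdd$ (your time normalisation $4t=\eps$ is correct), the correction series becomes the truncated Taylor polynomial of the inverse semigroup, and the key cancellation is the algebraic identity $\e^{x}\bigl(\e^{-x}-R_N(x)\bigr)=1-\e^{x}R_N(x)$ rather than a resummation of Gaussian moments. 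What your route buys is a cleaner remainder: the integral form of the Taylor remainder replaces the ill-behaved backward factor $\e^{-\sigma(\eps/4)\Delta}$ by the forward propagator $\e^{(1-\sigma)(\eps/4)\Delta}$, which is an $L^\infty$-contraction commuting with all derivatives, so the uniform estimate $\|\partial_q^\alpha\partial_p^\beta r_N^\eps\|_\infty\le 4^{-N}(N!)^{-1}\|\partial_q^\alpha\partial_p^\beta\Delta^N a\|_\infty$ is immediate and visibly involves only derivatives of $a$ of order $2N$ and higher, exactly as the statement demands. The two computations are of course equivalent --- the Gaussian moment identities are the physical-space shadow of expanding the multiplier $\e^{-\eps|\xi|^2/4}$ --- and both conclude with Calder\'on--Vaillancourt; the only step you should spell out is the $\eps$-uniformity of its constant, which follows from the unitary rescaling $x\mapsto\sqrt\eps\,x$ reducing $\op(r)$ to the $\eps=1$ quantization of $r(\sqrt\eps\,\cdot)$, whose derivative sup-norms are no larger than those of $r$.
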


\begin{proof}[Sketch of proof]
The assertion has been shown in~\cite[Lemma 1 and 2]{KL13}, see also~\cite[Proposition 2.4.3]{L10}. The proof builds on a Taylor expansion
of $a$ around the point $z$ in the convolution integral
\[
(\W_{g_0} * a)(z) = (\pi \eps)^{-d}\int_\Rdd a(\zeta) \e^{-|z-\zeta|^2/\eps} \dd \zeta
\]
that defines the Weyl symbol of $\opaw(a)$.
\end{proof}

We can combine Lemma~\ref{lem:weyl--aw} and~\eqref{eq:husimi_exp} in order to approximate
quantum expectation values by an integral
with respect to the Husimi function,
\[
\lw \psi,\op(a)\psi\rw = \int_\Rdd a(z) \W_\psi(z) \dd z =  \int_\Rdd \sum_{k=0}^{N-1} \frac{(-\eps)^k}{4^k k!}\Delta^k a(z) S^{g_0}_\psi(z) \dd z + O(\eps^N).
\]
Performing integration by parts on the above integral leads to the definition of a new family of smooth phase space densities.
\begin{mdef}\label{def:density}
Let $\eps>0$. For any $\psi\in L^2(\Rd)$ and $N\in \N$ we define
\[
\mu_\psi^{N}:\Rdd \to \R , \quad \mu_\psi^{N}(z) := \sum_{k=0}^{N-1}  \frac{(-\eps)^k}{4^k k!}\Delta^k S^{g_0}_\psi(z),
\]
where  $S^{g_0}_\psi = \W_\psi * \W_{g_0}$ is the Husimi transform of $\psi$.
\end{mdef}

Our following main Theorem shows that $ \mu_\psi^{N}$ can be used to 
replace the Wigner function $\W_\psi$ for approximating expectation values of Weyl quantized operators with $O(\eps^N)$ accuracy. 
Moreover, $ \mu_\psi^{N}$ can be written as a linear combination of Hermite spectrograms. 

\begin{thm}[Spectrogram expansion]\label{thm:spec_exp}
Let $\psi \in L^2(\Rd)$, $N\in \N$, and $\eps>0$. Then, the phase space function $\mu_\psi^{N}$
can be expressed in terms of Hermite spectrograms,
\begin{equation}\label{eq:def_mudens}
\mu_\psi^{N}(z) = \sum_{j=0}^{N-1} (-1)^j C_{N-1,j} \sum_{\substack{k\in \N^d \\ |k|=j}} S_\psi^{\ph_k}(z), \quad
C_{k,j} = \sum_{m=j}^k 2^{-m} {d-1+m \choose d-1 + j};
\end{equation}
see also Definition~\ref{def:density}.
Furthermore, if $a:\Rdd \to \C$ is a Schwartz function, there is a constant $C\geq 0$ such that
\begin{equation}\label{eq:spec_approx}
\bigg| \int a(z) \W_\psi(z)\dd z- \int_\Rdd a(z) \mu_\psi^{N}(z) \dd z \bigg| \leq C \eps^{N} \|\psi\|^2_{L^2} ,
\end{equation}
where $C$ only depends on bounds on  derivatives of $a$ of degree $2N$ and higher.
In particular, if $a$ is a polynomial of maximal degree $\deg(a)<2N$, one has $C=0$ and
the error in~\eqref{eq:spec_approx} thus vanishes.
\end{thm}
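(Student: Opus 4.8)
The plan is to prove the error estimate first, and then derive the Laguerre identity separately, since the two parts of the theorem rely on quite different machinery.

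\textbf{Part 1: the approximation estimate \eqref{eq:spec_approx}.}
First I would combine Lemma~\ref{lem:weyl--aw} with the anti-Wick/Husimi identity \eqref{eq:husimi_exp}. Writing $\op(a)=\opaw\bigl(\sum_{k=0}^{N-1}\tfrac{(-\eps)^k}{4^k k!}\Delta^k a\bigr)+\eps^N\op(r_N^\eps)$ and pairing against $\psi$ gives
\[
\lw\psi,\op(a)\psi\rw=\int_\Rdd\Bigl(\sum_{k=0}^{N-1}\tfrac{(-\eps)^k}{4^k k!}\Delta^k a(z)\Bigr)S^{g_0}_\psi(z)\,\dd z+\eps^N\lw\psi,\op(r_N^\eps)\psi\rw .
\]
The left-hand side equals $\int a\,\W_\psi$ by \eqref{eq:wigner_exp_exact}. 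On the main integral I would integrate by parts $k$ times, moving each Laplacian $\Delta^k$ from $a$ onto $S^{g_0}_\psi$ (the Schwartz decay of both $a$ and the Husimi function kills all boundary terms), which turns the sum into exactly $\int a(z)\,\mu^N_\psi(z)\,\dd z$ by Definition~\ref{def:density}. The remaining term $\eps^N\lw\psi,\op(r_N^\eps)\psi\rw$ is bounded in absolute value by $\eps^N\|\op(r_N^\eps)\|_{L^2\to L^2}\|\psi\|_{L^2}^2$, and the operator-norm bound from Lemma~\ref{lem:weyl--aw} shows this is $\leq C\eps^N\|\psi\|_{L^2}^2$ with $C$ controlled by derivatives of $a$ of order $2N$ and higher. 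When $a$ is a polynomial with $\deg(a)<2N$, the $2N$-th derivative $a^{(2N)}$ vanishes identically, so the Lemma's bound forces $C=0$ and the error disappears; equivalently, $\Delta^k a=0$ for $k$ large enough that the truncated Husimi sum is already exact.

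\textbf{Part 2: the Hermite spectrogram identity \eqref{eq:def_mudens}.}
Here the task is purely algebraic: I must show that the differential operator $\sum_{k=0}^{N-1}\tfrac{(-\eps)^k}{4^k k!}\Delta^k$ applied to $S^{g_0}_\psi=\W_\psi*\W_{g_0}$ equals $\sum_{j=0}^{N-1}(-1)^j C_{N-1,j}\sum_{|k|=j}\W_\psi*\W_{\ph_k}$. Since convolution commutes with $\Delta$, it suffices to prove the corresponding identity at the level of the Gaussian kernels, i.e. that $\sum_{k=0}^{N-1}\tfrac{(-\eps)^k}{4^k k!}\Delta^k\W_{g_0}=\sum_{j=0}^{N-1}(-1)^j C_{N-1,j}\sum_{|k|=j}\W_{\ph_k}$. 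This is where the Laguerre connection \eqref{eq:Wigner-Laguerre} enters: each $\W_{\ph_k}$ is a product of Laguerre polynomials times the Gaussian $\W_{g_0}$, and applying $\Delta$ to $\W_{g_0}=(\pi\eps)^{-d}\e^{-|z|^2/\eps}$ repeatedly also produces Hermite/Laguerre-type polynomial prefactors. The plan is to reduce both sides to explicit polynomial expansions in the variables $\tfrac2\eps|z_j|^2$ and match coefficients; the constants $C_{k,j}$ are precisely the combinatorial factors that make the two expansions agree, and I expect them to emerge from the Laplace-type computations developed in \S\ref{sec:laplace_laguerre}.

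\textbf{Main obstacle.}
The analytic half (Part~1) is essentially bookkeeping once Lemma~\ref{lem:weyl--aw} is in hand. The genuine difficulty is the combinatorial identity in Part~2: one must verify that iterated Laplacians of a Gaussian, expressed through Hermite polynomials, reorganize exactly into sums of Wigner--Laguerre functions with the stated binomial coefficients $C_{k,j}=\sum_{m=j}^k 2^{-m}\binom{d-1+m}{d-1+j}$. The hard part will be controlling the multi-dimensional Laguerre algebra, in particular tracking how the factor $\tfrac14\Delta^k$ redistributes across the product $\prod_{j=1}^d L_{k_j}$ and produces the multiplicity count $\sum_{|k|=j}$ together with the nested binomial sum. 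I would isolate this as the technical lemma of \S\ref{sec:laplace_laguerre}, reducing the $d$-dimensional claim to a one-dimensional Laguerre generating-function identity and then tensorizing.
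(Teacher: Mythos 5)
Your proposal is correct and follows essentially the same route as the paper: the error estimate via Lemma~\ref{lem:weyl--aw}, \eqref{eq:husimi_exp} and integration by parts, and the spectrogram identity by reducing to the kernel-level Laplace--Laguerre expansion of $\left(-\tfrac{\eps}{2}\Delta\right)^k \W_{g_0}$, proved one-dimensionally and then tensorized with combinatorial bookkeeping (this is exactly Proposition~\ref{lem:1d_gauss_lagu} and Theorem~\ref{thm:gauss-laguerre}). The only part left as a plan rather than executed is the multi-dimensional coefficient count, which is precisely where the paper invests its technical effort via the binomial identity \eqref{eq:sum_binom_product}.
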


We postpone the proof of Theorem~\ref{thm:spec_exp} to chapter~\S\ref{sec:proof_main_res}. 
Firstly, in~\S\ref{sec:laplace_laguerre},
we derive an expansion for iterated Laplacians of~$\W_{g_0}$. This is
the main ingredient for identifying
$\mu_\psi^N$ with a linear combination of Hermite spectrograms.

The second order version of Theorem~\ref{thm:spec_exp} has already been shown
in~\cite[Theorem 3.2 and Proposition 3.4]{KLO15}. There, we proved that one has
\[
\mu_\psi^{2}(z) = (1 + \tfrac{d}2) S_\psi^{g_0} - \tfrac12 \sum_{j=1}^d S_\psi^{\ph_{e_j}}
\]
as well as
\begin{equation}\label{eq:spec_approx2}
\bigg| \int a(z) \W_\psi(z)\dd z- \int_\Rdd a(z) \mu_\psi^{2}(z) \dd z \bigg| \leq C \eps^{2} \|\psi\|^2_{L^2}.
\end{equation}
for a constant $C>0$ depending on third and higher derivatives of $a$.

\begin{rem}
Theorem~\ref{thm:spec_exp} remains true for more general operators $\op(a)$ 
as long as $a$ is sufficiently regular; see also~\cite[\S4.4]{Z12}. If 
$\op(a)$ is unbounded, one has to choose $\psi$ from a suitable subset of $L^2(\Rd)$.
\end{rem}

\begin{rem}\label{eq:weak_approx}
The approximation~\eqref{eq:spec_approx} of
expectation values can also  be seen  as a weak approximation of
Wigner functions. In other words, we have
\[
\W_\psi =  \mu_\psi^{N} + O(\eps^N), \quad N\in \N,
\]
in the distributional sense. This observation  is particularly interesting since
$\W_\psi$ is only continuous in general, whereas $\mu_\psi^{N}$ is always real analytic.
\end{rem}

\subsection{Iterated Laplacians of phase space Gaussians}\label{sec:laplace_laguerre}
There are many famous interrelations between the derivatives of Gaussians 
and Hermite and Laguerre polynomials; see, e.g.,~\cite{Th93} and~\cite[\S V]{Sz75}.
We present an expansion of iterated Laplacians of the 
phase space Gaussian $\W_{g_0}$ based on Laguerre polynomials. 
To the best of our knowledge, this formula did not appear in the literature before.

We aim to express the polynomial factors arising in 
iterated Laplacians of $\W_{g_0}$ as linear combinations of the product polynomials
\begin{equation}\label{eq:sum_lagu}
\mathcal L_k(\r(z)) : =\prod_{j=1}^d L_{k_j}(\r_j(z)),\quad z \in \Rdd,  \quad k\in \N^d,
\end{equation}
where we use the variables
\begin{equation}\label{eq:rho_variables}
\r_j(q,p) = \tfrac{2}\eps (q_j^2 + p_j^2), \quad j=1,\hdots, d,
\end{equation}
for readability. As known from~\eqref{eq:Wigner-Laguerre}, these polynomials  
also appear in the Wigner functions of Hermite functions.
We split our proof into two parts and treat the one-dimensional case first.

\begin{prop}\label{lem:1d_gauss_lagu} 
Let $d=1$ and $\eps>0$. Then, for all $N\in \N$ we have
\begin{equation*}
\( -\tfrac{\eps}2 \Delta \)^N \W_{g_0}(z)= N! \; \W_{g_0}(z) \sum_{n=0}^N {N \choose n} L_n(\r(z)), \quad  z\in\R^2,
\end{equation*}
where  $L_n$ is the $n$th Laguerre polynomial, and $\r$ has been defined in~\eqref{eq:rho_variables}.
\end{prop}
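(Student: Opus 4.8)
The plan is to exploit the rotational symmetry of $\W_{g_0}$ to turn the two-dimensional operator $-\tfrac\eps2\Delta$ into an ordinary differential operator in the single radial variable $\r$, and then to diagonalize its iterated action in the Laguerre basis. In $d=1$ one has $\W_{g_0}(q,p) = (\pi\eps)^{-1}\e^{-\r/2}$, which depends on $(q,p)$ only through $\r = \tfrac2\eps(q^2+p^2)$. A direct chain-rule computation shows that, acting on functions of $\r$ alone, $-\tfrac\eps2\Delta$ equals $A := -4(\r\,\partial_\r^2 + \partial_\r) = -4\,\partial_\r(\r\,\partial_\r)$. Thus the claim reduces to the identity $A^N\e^{-\r/2} = N!\,\e^{-\r/2}\sum_{n=0}^N\binom Nn L_n(\r)$, and the factor $(\pi\eps)^{-1}$ plays no role.

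Next I would strip off the Gaussian by conjugation, so as to work with polynomials only. Writing $B := \e^{\r/2}A\,\e^{-\r/2}$, one has $A^N\e^{-\r/2} = \e^{-\r/2}B^N 1$, and using $\e^{\r/2}\partial_\r\e^{-\r/2} = \partial_\r - \tfrac12$ a short computation gives $B = -4\big(\r\,\partial_\r^2 + (1-\r)\partial_\r\big) + (2-\r)$. The first summand is minus four times the classical Laguerre operator, for which $L_n$ is an eigenfunction with eigenvalue $-n$. Combining this with the three-term recurrence $\r L_n = (2n+1)L_n - (n+1)L_{n+1} - n L_{n-1}$ to re-express the multiplication by $(2-\r)$ then yields the tridiagonal action
\[
B L_n = (n+1)L_{n+1} + (2n+1)L_n + n L_{n-1}.
\]

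Finally, writing $B^N 1 = B^N L_0 = \sum_{n} c_{N,n}L_n$, the tridiagonal action turns into the scalar recurrence $c_{N+1,m} = m\,c_{N,m-1} + (2m+1)c_{N,m} + (m+1)c_{N,m+1}$, and I would prove by induction on $N$ that $c_{N,n} = N!\binom Nn$. The inductive step reduces to the purely combinatorial identity $(N+1)\binom{N+1}{m} = m\binom{N}{m-1} + (2m+1)\binom Nm + (m+1)\binom N{m+1}$, which follows from Pascal's rule together with the elementary simplifications $(N+1-m)\binom N{m-1} = m\binom Nm$ and $(m+1)\binom N{m+1} = (N-m)\binom Nm$. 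Undoing the conjugation and restoring $(\pi\eps)^{-1}$ then gives the proposition.

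The conceptual heart, and the step I would single out, is recognizing that the conjugated operator is a Laguerre operator perturbed by multiplication by $2-\r$, so that the recurrence for $\r L_n$ closes the action onto a tridiagonal band; once this is in place, the rest is a routine induction. The part most prone to error is the index bookkeeping in passing from the tridiagonal action to the coefficient recurrence, so I would check the boundary terms $n=0$ and $n=N$ (where $L_{-1}$ and $L_{N+1}$ drop out) explicitly before trusting the general pattern.
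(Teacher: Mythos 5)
Your proof is correct and follows essentially the same route as the paper's induction in Appendix~A: the tridiagonal action $B L_n = (n+1)L_{n+1} + (2n+1)L_n + nL_{n-1}$ that you derive from Laguerre's differential equation and the three-term recurrence is exactly the paper's identity~\eqref{eq:Laguerre_reformu}, and your coefficient recurrence $c_{N+1,m} = m\,c_{N,m-1} + (2m+1)c_{N,m} + (m+1)c_{N,m+1}$ with the binomial verification is the same counting argument the paper carries out on the sum $\sum_n \binom{N}{n} L_n$. The only difference is presentational: you reduce to the radial variable and conjugate away the Gaussian before inducting, whereas the paper differentiates $\W_{g_0} L_n(\r)$ directly.
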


An induction proof of Proposition~\ref{lem:1d_gauss_lagu} can be
 found in~\ref{app:proof_lem}.

In higher dimensions one has to sum over the Laguerre products $\mathcal L_k(\r)$ 
instead of the polynomials $L_n(\r)$.
However, by applying Proposition~\ref{lem:1d_gauss_lagu}, the proof
for the multi-dimensional formula reduces to a bookkeeping exercise.

In the proof of the following Theorem we repeatedly use the binomial identity
\begin{equation}\label{eq:sum_binom_product}
\sum_{j=0}^{N-m}{N-j \choose m} {k+j \choose j}  = {N+k+1 \choose N-m},
\quad  k,N,m \in \N, \quad m\leq N.
\end{equation}
For the reader's convenience we include a short proof
of~\eqref{eq:sum_binom_product} in~\ref{sec:proof_binom}.

\begin{thm}\label{thm:gauss-laguerre}
Let $\eps>0$, $d\in \N$ and $N\in \N$. Then,
\begin{equation}\label{eq:multidi_laplace_gauss_formula}
\( -\tfrac{\eps}2 \Delta \)^N \W_{g_0}(z)= N!  \, \W_{g_0}(z)  \sum_{n=0}^N {N +d -1 \choose n+d-1} \sum_{k\in \N^d, |k|=n} \mathcal L_k(\r(z)),
\end{equation}
where $z\in\Rdd$ and the polynomials $\mathcal L_k\circ \r$ have been defined in~\eqref{eq:sum_lagu}.
\end{thm}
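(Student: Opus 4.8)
The plan is to reduce the multidimensional identity to the one-dimensional Proposition~\ref{lem:1d_gauss_lagu} by exploiting the tensor-product structure of $\W_{g_0}$, and then to resolve the resulting combinatorial sum with the binomial identity~\eqref{eq:sum_binom_product}. First I would note that the Gaussian factorizes over the coordinate pairs $z_j=(q_j,p_j)\in\R^2$, namely $\W_{g_0}(z)=\prod_{j=1}^d(\pi\eps)^{-1}\e^{-|z_j|^2/\eps}$, and that the phase space Laplacian splits as $\Delta=\sum_{j=1}^d\Delta_j$ with $\Delta_j=\d_{q_j}^2+\d_{p_j}^2$ acting only on the $j$th factor. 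Since the operators $\Delta_j$ commute, the multinomial theorem yields
\[
\(-\tfrac{\eps}{2}\Delta\)^N=\sum_{\substack{m\in\N^d\\|m|=N}}\binom{N}{m}\prod_{j=1}^d\(-\tfrac{\eps}{2}\Delta_j\)^{m_j},\qquad \binom{N}{m}=\frac{N!}{m_1!\cdots m_d!}.
\]
Applying each one-dimensional operator to its own Gaussian factor via Proposition~\ref{lem:1d_gauss_lagu}, and observing that $\binom{N}{m}\prod_{j=1}^d m_j!=N!$, I obtain
\[
\(-\tfrac{\eps}{2}\Delta\)^N\W_{g_0}(z)=N!\,\W_{g_0}(z)\sum_{\substack{m\in\N^d\\|m|=N}}\prod_{j=1}^d\sum_{k_j=0}^{m_j}\binom{m_j}{k_j}L_{k_j}(\r_j(z)).
\]

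Next I would multiply out the product of sums into a single sum over $k\in\N^d$. This replaces $\prod_{j=1}^d L_{k_j}(\r_j(z))$ by $\mathcal L_k(\r(z))$ and collects the coefficient of each $\mathcal L_k(\r(z))$ into $\sum_{|m|=N,\,m_j\geq k_j}\prod_{j=1}^d\binom{m_j}{k_j}$. Writing $m_j=k_j+l_j$ with $l_j\geq 0$, so that $|l|=N-n$ with $n:=|k|$, this coefficient becomes $\sum_{|l|=N-n}\prod_{j=1}^d\binom{k_j+l_j}{l_j}$. The remaining task is to show that this quantity equals $\binom{N+d-1}{n+d-1}$, which depends on $k$ only through $n=|k|$; comparing with~\eqref{eq:multidi_laplace_gauss_formula} then finishes the proof.

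The main obstacle is precisely this combinatorial identity, which I would establish by induction on the dimension $d$. The base case $d=1$ forces $l_1=N-n$ and gives $\binom{N}{N-n}=\binom{N}{n}$, in agreement with Proposition~\ref{lem:1d_gauss_lagu}. For the inductive step I would split off the summation index $l_d$, apply the induction hypothesis to the first $d-1$ factors to rewrite their contribution as a single binomial coefficient, and match the one-variable sum that remains to~\eqref{eq:sum_binom_product} with the choices $\tilde k=k_d$, $j=l_d$, $\tilde m=(n-k_d)+d-2$ and $\tilde N=(N-n)+(n-k_d)+d-2$, so that $\tilde N-\tilde m=N-n$. The identity~\eqref{eq:sum_binom_product} then evaluates the sum to $\binom{\tilde N+k_d+1}{\tilde N-\tilde m}=\binom{N+d-1}{N-n}=\binom{N+d-1}{n+d-1}$, as required. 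Everything apart from this identity is the announced bookkeeping, and~\eqref{eq:sum_binom_product} is exactly what makes it close.
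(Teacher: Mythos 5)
Your proposal is correct, and its first half — tensorizing $\W_{g_0}$, applying the multinomial theorem, and invoking Proposition~\ref{lem:1d_gauss_lagu} factorwise to reach $N!\,\W_{g_0}(z)\sum_{|m|=N}\prod_{j}\sum_{k_j\le m_j}\binom{m_j}{k_j}L_{k_j}(\r_j(z))$ — coincides exactly with the paper's. Where you genuinely diverge is in the bookkeeping that follows. The paper collects the coefficient of each $\mathcal L_\ell\circ\r$ by a direct counting argument: it distinguishes the nonzero entries of $\ell$, distributes the ``excessive index points'' over the remaining coordinates, and telescopes a nest of $r$ sums by applying~\eqref{eq:sum_binom_product} once per nonzero entry. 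You instead isolate the coefficient in the clean closed form
\begin{equation*}
\sum_{\substack{l\in\N^d\\ |l|=N-n}}\ \prod_{j=1}^d\binom{k_j+l_j}{l_j}=\binom{N+d-1}{n+d-1},\qquad n=|k|,
\end{equation*}
and prove this multivariate Vandermonde--Chu identity by induction on $d$, using~\eqref{eq:sum_binom_product} once per dimension; your parameter matching ($\tilde N-\tilde m=N-n$, final value $\binom{N+d-1}{N-n}$) checks out. This buys a tidier argument: the coefficient visibly depends on $k$ only through $|k|$, no case analysis on the support of $\ell$ is needed, and the induction is over a single integer rather than a nest of sums. One further shortcut worth noting: your identity follows in one line from the generating-function argument already used in~\ref{sec:proof_binom}, since the left-hand side is the coefficient of $x^{N-n}$ in $\prod_j(1-x)^{-(k_j+1)}=(1-x)^{-(n+d)}$, which is $\binom{N-n+n+d-1}{N-n}$; this would let you dispense with the induction entirely.
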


\begin{proof}
Since $\W_{g_0}$ is a tensor product of $d$ bivariate Gaussians of the form
\[
G(x,\xi) = (\pi \eps)^{-1}\e^{-(x^2+\xi^2)/\eps}, \quad (x,\xi)\in\R^2,
\]
the multinomial theorem implies
\begin{align}
\( -\tfrac{\eps}2 \Delta \)^{N} \W_{g_0}(z)&\nonumber= \( -\tfrac{\eps}2 (\Delta_{z_1} + \hdots + \Delta_{z_d}) \)^{N} \prod_{j=1}^d G(z_j)\\
&\nonumber= \sum_{k\in \N^d, |k|=N} {N \choose k_1,\hdots,k_d} (-\tfrac{\eps}2 \Delta)^k \W_{g_0}(z)
\end{align}
where $\Delta_{z_j} = \partial_{q_j}^2 + \partial_{p_j}^2$ and $\Delta ^k = \Delta_{z_{1}}^{k_1}\cdots \Delta_{z_d}^{k_d}$.
Consequently, after applying Proposition~\ref{lem:1d_gauss_lagu} and reordering the sum, we arrive at
\begin{align}
\( -\tfrac{\eps}2 \Delta \)^{N} \W_{g_0}(z)&\nonumber= \sum_{k\in \N^d, |k|=N} {N \choose k_1,\hdots,k_d} k! \prod_{j=1}^d \sum_{m=0}^{k_j} {k_j \choose m} G(z_j)L_{m}(\r_j(z))\\
&\label{eq:mehrdi_vor_reorder}= N!\W_{g_0}(z)  \sum_{k \in \N^d, |k|=N}  \prod_{j=1}^d \sum_{m=0}^{k_j} 
{k_j \choose m} L_{m}(\r_j(z)).
\end{align}
Now, we collect all binomial coefficients  belonging to one
polynomial $\mathcal L_\ell(\r)$ with~$0\leq |\ell| \leq N$. We treat the simple cases $|\ell|\leq 1$ separately 
in order to illustrate our counting procedure.

\begin{description}
\item[$\ell=0:$] In the sum~\eqref{eq:mehrdi_vor_reorder}, the polynomial $\mathcal L_0 \circ \r$ appears
\[
|  \{k \in \N^d:|k |=N\} |= {N+d-1\choose d-1}
\]
times. For all $k\in \N^d$ and $1\leq j \leq d$ we get the prefactor ${k_j \choose 0}  =1$.

\item[$|\ell|=1:$] For $\ell=e_i$, $i\in\{1,\hdots,d\}$, the coefficient of $\mathcal L_{e_i}\circ \r$ can be computed as follows. 
If  $k_i= N$ in~\eqref{eq:mehrdi_vor_reorder}, the binomial prefactor
is $ {N \choose 1}$. If $k_i = N-1$, there are ${d-1 \choose 1}$
ways to distribute the excessive index point, and this choice does not influence the prefactor
 $ {N-1\choose 1}$. For $k_i = N-2$ there are
${d \choose 2}$ ways to distribute the two excessive index points, and the prefactor
is ${N-2\choose 1}$. Continuing in the same way, and computing the sum via~\eqref{eq:sum_binom_product}, we obtain
\[
\sum_{j=0}^{N-1}  {N-j \choose 1}{d-2+j \choose j} = {N+d-1 \choose d},
\]
which is the coefficient of the $n=1$ term in~\eqref{eq:multidi_laplace_gauss_formula}.
\item[$|\ell|=n\leq N:$]   Without loss of generality, assume that $\ell$ has $1\leq r \leq d$ nonzero entries $\ell_1,\hdots,\ell_r >0$ and 
$\ell_{r+1},\hdots,\ell_d = 0$. Otherwise rename the coordinates. For every $k\in \N^d$
and $s \leq d $ we define the partial sums $|k|_s = k_1 + \hdots +k_s$ such that $|\ell|_r = n$.

Then, if $|k|_r = N$ in~\eqref{eq:mehrdi_vor_reorder}, one has to sum all prefactors of the form
\[
 \prod_{j=1}^r {k_j \choose \ell_j}, \quad k_j \geq \ell_j, \quad  \sum_{j=1}^r k_j=N.
\]
If $|k|_r = N-1$,  one additionally has ${d-r \choose 1}$ ways to distribute the excessive index point et cetera.
In total, all prefactors of $\mathcal L_\ell(\r(z))$ are given by
\begin{align*}
\sum_{m_1=0}^{N-n}&\sum_{m_2=\ell_{2}}^{M+|\ell|_2 - |m|_1}  \sum_{m_3=\ell_{3}}^{M+|\ell|_3-|m|_2} 
\cdots  \sum_{m_r=\ell_{r}}^{M+|\ell|_r - |m|_{r-1}} 
{N-|m|_r \choose \ell_{1}} \\
& \times {m_2 \choose \ell_{2}}  \cdots {m_r \choose \ell_{r}} 
{d-r  -1+m_1 \choose m_1} 
\end{align*}
where $m=(m_1,\hdots,m_r)\in \N^r$ and $M=N-n-\ell_{1}$. The summation over  $m_1$ captures 
all index points of $k$ in the components $r+1,\hdots,d$.
For the innermost sum we compute
\begin{align*}
&\sum_{m_r=\ell_r}^{M+|\ell|_r - |m|_{r-1}}
  {N-|m|_r \choose \ell_1} {m_r \choose  \ell_r} \\
  &= \sum_{m_r=0 }^{M+|\ell|_{r-1} - |m|_{r-1}}
  {N-|m|_{r-1}-m_r -\ell_{r}  \choose \ell_1 } {m_r + \ell_r \choose m_r} \\
  &={ N-|m|_{r-1} + 1 \choose 1+\ell_r +\ell_1 }
\end{align*}
by invoking~\eqref{eq:sum_binom_product}. Repeating this computation in a similar way for the sums over $m_{r-1} ,\hdots, m_2$
one is left with the last sum over $m_1$, which gives
\begin{align*}
\sum_{m_1=0}^{N-n} {N- m_1 + r -1 \choose n + r -1} {d-r  -1+m_1 \choose m_1}  &=  {N+d -1 \choose n+d -1},
\end{align*}
again by using~\eqref{eq:sum_binom_product}. 
\end{description}
Rewriting~\eqref{eq:mehrdi_vor_reorder} by incorporating the above calculations completes the proof.
\end{proof}

\subsection{Proof of the main result}\label{sec:proof_main_res}

We can now prove our main result Theorem~\ref{thm:spec_exp}. The central idea is to 
employ the Laplace-Laguerre formula from Theorem~\ref{thm:gauss-laguerre}, and to
identify the Laguerre polynomials with the prefactors appearing in
the Wigner functions~\eqref{eq:Wigner-Laguerre}. 
These Wigner functions in turn are the convolution kernels of Hermite spectrograms.

\begin{proof}[Proof of Theorem~\ref{thm:spec_exp}]
Let $a:\Rdd \to \R$ be an $\eps$-independent Schwartz function. Then,
by invoking \eqref{eq:husimi_exp} and Lemma~\ref{lem:weyl--aw}, we have
\begin{align*}
\lw\psi,\op(a) \psi \rw &= \int_\Rdd  \sum_{m=0}^{N-1}\frac{(-\eps\Delta)^m}{4^k m!}  a(z) ( \W_{g_0}* \W_\psi)(z) \dd z + \eps^{N} \lw \psi,\op(r_{N}^\eps)  \psi \rw
\end{align*}
where $\{r_{N}^\eps\}_{\eps>0}$ is a family  of Schwartz functions with uniformly bounded operator norm.
 Note that $r_N^\eps$ only depends on $2N$th and higher order derivatives of $a$; see also~\cite[\S2.3]{KLO15}.
Repeated integration by parts yields
\begin{align*}
\lw\psi,\op(a) \psi \rw &= \int_\Rdd  a(z)   \sum_{m=0}^{N-1}\frac{(-\eps\Delta)^m}{4^k m!} 
( \W_{g_0}* \W_\psi)(z) \dd z + \eps^{N} \lw \psi,\op(r_{N}^\eps)  \psi \rw,
\end{align*}
and we recognize the phase space density $\mu^{N}_\psi$ from Definition~\ref{def:density}.
Now, by Theorem~\ref{thm:gauss-laguerre}  we have
\begin{align*}
\frac{(-\eps\Delta)^m}{4^k m!} &
( \W_{g_0}* \W_\psi)(z)= \frac{1}{m!} 
\((-\tfrac\eps4\Delta)^m \W_{g_0}* \W_\psi\)(z) \\
&=   \(2^{-m} \W_{g_0}  \sum_{j=0}^m {m +d -1 \choose j+d-1} \sum_{k\in \N^d, |k|=j}
 \mathcal L_k(\varrho) *\W_\psi  \)  (z),
\end{align*}
and using formula~\eqref{eq:Wigner-Laguerre} leads us to
\begin{align*}
\frac{(-\eps\Delta)^m}{4^k m!} &
\(\W_{g_0}*\W_\psi\)(z)=   2^{-m} \sum_{j=0}^m (-1)^{j}  {m +d -1 \choose j+d-1} \sum_{k\in \N^d, |k|=j}
 S_\psi^{\varphi_k} (z).
\end{align*}
Finally, summing over all $m=1\hdots N-1$ and reordering the sum gives
\begin{align*}
 \sum_{m=0}^{N-1}
  2^{-m} \sum_{j=0}^m (-1)^{j}  {m +d -1 \choose j+d-1} \sum_{k\in \N^d, |k|=j}
 S_\psi^{\varphi_k} (z)
 &= 
  \sum_{j=0}^{N-1} C_{N-1,j}    \sum_{k\in \N^d, |k|=j}
 S_\psi^{\varphi_k} (z)  
\end{align*}
with
\[
C_{k,j}   = \sum_{m=j}^{k} 2^{-m} (-1)^{j}  {m +d -1 \choose j+d-1}, \quad j=0,\hdots,k,
\]
and the assertion follows.
\end{proof}

\subsection{Examples}\label{sec:example_densities}

From~\cite[Proposition 5]{LT14} we know that the Husimi functions
of the Hermite functions $\{\ph_k\}_{k\in\N^d}$ are given by the formula
\begin{equation*}\label{eq:spec_herm_hus}
S^{g_0}_{\ph_k}(z) = S^{\ph_k}_{g_0}(z) = (2\pi \eps)^{-d} \frac{\e^{-|z|^2/2\eps}}{(2 \eps)^{|k|} k!} |z|^{2k}. 
\end{equation*}
By using the covariance property of Wigner functions
with respect to Heisenberg-Weyl operators $T_z$,
\begin{equation}\label{eq:heisenberg_weyl}
T_{(q,p)}\psi = \e^{\i p(\bullet - q/2)/\eps} \psi(\bullet -q), \quad \psi \in L^2(\Rd),
\end{equation}
 see~\cite[Proposition 174]{dG11},
 one then can easily compute the 
new phase space densities $\mu_\psi^N$  for a one-dimensional Gaussian wave packet $\psi = g_w$, $w\in \R^2$.
Namely, we find the weak approximations
\begin{align}
\W_{g_w}(z)  &= \sum_{j=0}^N (-1)^j \sum_{m=j}^N 2^{-m}  {m \choose j} 
S^{\varphi_j}_{g_w}(z) + O(\eps^{N+1})\\
&\nonumber=   \sum_{j=0}^N (-1)^j 
\frac{(2\pi\eps)^{-1}}{j!}
\big| \tfrac{z-w}{\sqrt{2\eps}} \big|^{2j} \e^{-|z-w|^2/2\eps}  \sum_{m=j}^N 2^{-m}  {m \choose j} + O(\eps^{N+1})
\end{align}
for $z\in \R^2$, such that the first three nontrivial approximations read
\begin{align}
\W_{g_w}(z)&\nonumber= (2\pi\eps)^{-1}\e^{-|z-w|^2/2\eps} \(   \tfrac32  - \tfrac12 \tfrac{|z-w|^2}{2\eps}\)+O(\eps^2), \\
\W_{g_w}(z)&\label{eq:mu_gauss1d}= (2\pi\eps)^{-1}\e^{-|z-w|^2/2\eps} \(   \tfrac74 - \tfrac{|z-w|^2}{2\eps} +  \tfrac14 \tfrac{|z-w|^4}{ 2! 4\eps^2} \)+O(\eps^3), \\
\W_{g_w}(z)&\nonumber=(2\pi\eps)^{-1} \e^{-|z-w|^2/2\eps}\(   \tfrac{15}8  -  \tfrac{11}{8}   \tfrac{|z-w|^2}{2\eps} +  
\tfrac{5}{8} \tfrac{|z-w|^4}{2! 4\eps^2}  - \tfrac{1}8\tfrac{|z-w|^6}{3!8\eps^3}  \)+O(\eps^4).
\end{align}
It is striking 
 that the sequence of densities $\mu_{g_w}^N$ does not only approximate $\W_{g_w}$ weakly as $\eps \to 0$, but
even seems to yield a strong approximation as $N\to \infty$,
see Figure~\ref{fig:gauss_densities}.

\begin{figure}[h!]
\includegraphics[width=\textwidth]{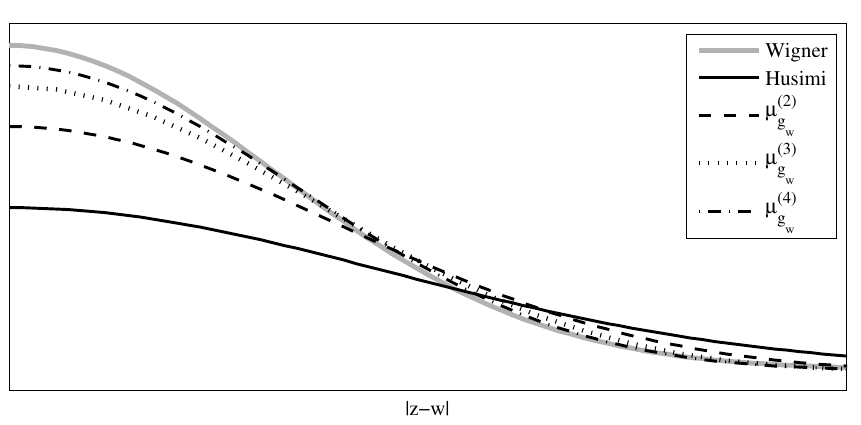}
\caption{\label{fig:gauss_densities} Decay of the Wigner and Husimi function 
as well as  the densities $\mu_{g_w}^{N}$, $N\leq 4$, 
for a Gaussian wave packet $g_w$ in dependence of the distance $|z-w|$ 
from the phase space center.}
\end{figure}

In higher dimensions
 one has to incorporate different prefactors and sum over all Hermite spectrograms
 of the same total degree, but the structure of the approximations~\eqref{eq:mu_gauss1d}
 remains the same.

For Gaussian superpositions $\psi = g_{z_1} + g_{z_2}$ with phase
space centers $z_1,z_2 \in \Rdd$,
and Hermite functions,
$\psi = \ph_k$, the second order density $\mu^2_\psi$ has been computed
in~\cite[\S5]{KLO15} by using ladder operators. The same 
 technique can in principle 
also be used to compute higher order densities $\mu^N_\psi$,
but will lead to tedious calculations. The structure of the densities
for a Gaussian superposition, however, is always of the form
\[
\mu^N_{g_{z_1} + g_{z_2}} = \mu^N_{g_{z_1}} + 
\mu^N_{g_{z_2}} + \e^{-|z_1-z_2|/8\eps} C^N_{z_1,z_2},
\]
where $C^N_{z_1,z_2}$ is an oscillatory cross term, see also~\cite[\S5]{KLO15}.
The damping factor $\e^{-|z_1-z_2|/8\eps}$ is exponentially small in $\eps$,
such that one can safely ignore the cross term in computations as soon as
 $z_1$ and $z_2$ are sufficiently apart.
In contrast, the cross term in the
Wigner function of a Gaussian superposition does not contain a damping factor.
Hence, the interferences are large and cannot be neglected. 

In this paper we do not further investigate explicit formulas for spectrogram densities.
Instead, we discuss a Markov Chain method for sampling from spectrogram densities  that is
tailored to practical applications.  In particular, the method can be applied to a wide 
range of states and circumvents the difficulties of explicitely computing
Wigner or Husimi functions, see~\S\ref{sec:MCMC}.

\section{Applications}\label{sec:applications}

\subsection{Quantum Expectations}\label{sec:quant_exp}
In chemistry, the expectation values of Weyl quantized observables
are often computed via the Monte Carlo quadrature
\begin{equation}\label{eq:wigner_method_exp}
\lw \psi, \op(a) \psi \rw =\int_\Rdd a(z)\W_\psi(z) \dd z \approx \frac1n \sum_{m=1}^n a(z_m)
\end{equation}
where $z_1,\hdots,z_n\sim \W_\psi$ are distributed with respect 
to the Wigner function, see~\cite{TW04,KL14}. Generically, however, $\W_\psi$ is not a probability density
and direct sampling techniques cannot be applied. Instead of using methods like importance sampling
we propose to replace $\W_\psi$ by a spectrogram density $\mu^N_\psi$, which
is a linear combination of smooth probability densities. That is,
we approximate
\begin{align}
\lw \psi, \op(a) \psi \rw &\nonumber=\int_\Rdd a(z)\mu^N_\psi(z) \dd z +O(\eps^N) \\
&\label{eq:approx_exp_mu}\approx   \sum_{j=0}^{N-1} (-1)^jC_{N-1,j}{N+d-1\choose d-1} \frac1n \sum_{m=1}^n a(z^j_m),
\end{align}
where the phase space points are sampled from the probability densities given by the averaged Hermite spectrograms 
of a given order,
\begin{equation}\label{eq:smapling_densities}
z^j_1,\hdots,z^j_n\sim {N+d-1\choose d-1}^{-1} \sum_{k\in \N^d,|k|=j} 
S_\psi^{\ph_k}, \quad j=1,\hdots N-1.
\end{equation}
Obviously,  method~\eqref{eq:approx_exp_mu}  is typically only practicable if the
dimension $d$ is not too large and one does not need to go to a very high order $N$. 
However, for the majority of applications in physical
chemistry  this is the case. 

\begin{rem}
Instead of considering the probability densities~\eqref{eq:smapling_densities}
it could often be more practicable to sample from each spectrogram $S_\psi^{\ph_k}$, $|k|<N$, seperately.
Alternatively, sometimes it might be possible to combine all spectrograms that appear with positive or 
negative prefactors. In that case, one would only need to sample from two probability densities.
\end{rem}

\subsection{Sampling via Metropolis-Hastings}\label{sec:MCMC}

Evaluating the highly oscillatory integral~\eqref{eq:wigner} 
defining the Wigner function in several dimensions is numerically extremely
challenging or ---  for the majority of systems --- simply unfeasible. Together with the sampling problem arising from the fact that Wigner functions may attain negative values, this is a major bottleneck for  the applicability of~\eqref{eq:wigner_method_exp}.
Moreover, often one also cannot explicitely compute the spectrogram densities~\eqref{eq:smapling_densities} either.
Instead, we propose a Markov chain sampling scheme for spectrograms based on
the inner product representation with Hermite functions
\begin{equation}\label{eq:hermite_spec_quad}
S^{\ph_k}_\psi(z) = (2 \pi \eps)^{-d} |\lw \psi, T_z \ph_k \rw|^2, \quad z \in \Rdd,
\end{equation}
where the Heisenberg-Weyl operator $T_z$ has been defined in~\eqref{eq:heisenberg_weyl};
see also~\cite{KLO15}.
This method does not require to determine $S^{\ph_k}_\psi$  globally as a function, but only
involves pointwise evaluations.
   
For approximating the inner products~\eqref{eq:hermite_spec_quad}
one can use different methods. Natural choices certainly include Gauss-Hermite,
Monte Carlo or Quasi-Monte Carlo quadrature rules. All these schemes
exploit the Gaussian factor appearing in the Hermite functions.
Monte Carlo quadrature is especially useful in higher dimensions, where one would need to 
employ sparse grids when applying Gauss-Hermite quadrature, see, e.g., \cite[\S III.1]{Lu08}. 

We propose to generate a Markov chain with stationary distribution 
$S^{\ph_k}_\psi$ via the Metropolis-Hastings algorithm. We implement the following iteration that
starts from a seed $z_0\in \Rdd$ with probability $S^{\ph_k}_\psi(z_0)$.
\begin{enumerate}
\item Proposition: set $z = z_n + \sqrt{\eps} \zeta$ with a random vector $\zeta \sim N(0,\Id_{2d})$.
\item Quadrature: approximately evaluate $S^{\ph_k}_\psi(z)$ via~\eqref{eq:hermite_spec_quad}.
\item Acceptance: generate a uniform random number $\rho \sim U([0,1])$. Accept the trial point if
$\rho < S^{\ph_k}_\psi(z)/S^{\ph_k}_\psi(z_{n})$, and set  $z_{n+1}=z$. Otherwise, reject the proposition and keep the old point $z_{n+1} = z_{n}$.
\end{enumerate}
We used a normal density of variance $\eps$ as proposal
distribution, since --- as the Husimi function of a Gaussian wave packet ---
it is a prototype spectrogram for functions with $O(\eps^{-1})$ frequencies. If one knows in 
 advance that the spectrogram $S^{\ph_k}_\psi$
has a disconnected effective support in phase space, one may additionally incorporate a jump step 
in the spirit of~\cite[\S5.1]{KLW09}.

If the Markov chain $\{z_n\}_n$ is uniformly ergodic, the central limit theorem implies weak convergence of averages, see~\cite{T94}. 
More precisely, for any function $a:\Rdd \to \R$ that is 
square-integrable with respect to  $S^{\ph_k}_\psi$ there is a constant
 $c_a$ such that
\[
\lim_{n\to \infty} \mathbb{P}\(\bigg| \tfrac1n \sum_{j=1}^n a(z_j) - \int_\Rdd a(z) S^{\ph_k}_\psi(z) \dd z \bigg| \leq \frac{r c_a}{\sqrt{n} }\) 
= \frac{1}{\sqrt{2\pi}} \int_{-r}^r \e^{-t^2/2}\dd t
\]
for any $r>0$. In particular, this implies convergence of the method~\eqref{eq:approx_exp_mu} for the computation of quantum
expectation values.
We stress that the convergence rate of $n^{-1/2}$ does not depend on the dimension $d$ of the configuration space.

\subsection{Quantum dynamics}\label{sec:quant_dyn}

In physical chemistry, the computation of stationary quantum expectation  values itself is not
of central interest. Instead, one would like to compute the evolution of expectation values
\[
t \mapsto \lw \psi_t,\op(a) \psi_t\rw.
\]
where the wave function $\psi_t$ represents the state of the molecule's nuclei at time $t$
in the Born-Oppenheimer approximation.
Here, the evolution of the wave function $\psi_t$ on an electronic potential energy surface $V$ is typically
described by the bona fide Schr\"odinger equation
\[
i \eps \partial_t \psi_t = -\tfrac{\eps^2}2\Delta \psi_t + V \psi_t,
\]
where the small parameter $0<\eps\ll 1$ represents 
the square root of the electronic versus average nuclear mass; see~\cite{ST01}. 
Consequently, by combining Egorov's theorem (see~\cite{BR02,KL14}) with~\eqref{eq:spec_approx2} 
one obtains the second order approximation
\begin{align}\label{eq:egorov_wig}
\lw \psi_t,\op(a) \psi_t\rw &= \int_\Rdd \W_{\psi_0}(z)(a\circ \Phi^t)(z)\dd z + O(\eps^2)  \\
&\label{eq:egorov_mu2}= \int_\Rdd \mu^2_{\psi_0}(z)a(z)\dd z + O(\eps^2),
\end{align}
 where $\Phi^t$ is the
the flow of the underlying classical Hamiltonian system $\dot q = p$, $\dot p = -\nabla V(q)$. 
This approximation and its discretization has  been studied in~\cite{KLO15}. 

The spectrogram method~\eqref{eq:egorov_mu2}  improves the Wigner function method~\eqref{eq:egorov_wig} that 
has been widely used in chemical physics since decades
under the name linearized semiclassical initial value representation (LSC-IVR) or 
Wigner phase space method; see, e.g., \cite{TW04,KL14}.

One can construct higher order versions of~\eqref{eq:egorov_mu2} that only require 
sampling from probability
densities and solving ordinary equations. For this purpose one combines the densities $\mu^N_{\psi_0}$
from Theorem~\ref{thm:spec_exp} with higher order corrections
of Egorov's theorem for the quantum dynamics, see~\cite{BR02,GL14}.
We leave the details to future investigations.

\section{Numerical Experiments}

\subsection{Accuracy}\label{sec:accuracy}

In a first set of experiments we investigate if the asymptotic error of our approximation from Theorem~\ref{thm:spec_exp}
is observed in practice. For this purpose
we consider a one-dimensional Gaussian wave packet $\psi = g_{z_0}$ centered in $z_0=(\tfrac12,-1)$ and varying values
of $\eps$. We compare the
expectation values of the following observables with their approximation via the spectrogram approximation with 
density $\mu_\psi^N$ for the orders $N=1,\hdots,4$:
\begin{enumerate}
\item $a(q,p) = q^4+1$
\item $b(q,p) = \tfrac14(p^2 - q )^3$
\item $c(q,p) = \cos(q)$
\item  $d(q,p) = \exp(\sin(q))$.
\end{enumerate}
We used the formulas for the spectrogram densities $\mu_\psi^N$ from~\eqref{eq:mu_gauss1d}. For the observables
$a$, $b$ and $c$ all computations can be done explicitely. For the observable $d$ we used a highly accurate quadrature scheme.
The results depicted in figure~\ref{fig:gauss_error}  show that the errors are indeed of order $O(\eps^{N})$. Moreover, as expected,
in the cases $N=3$ and $N=4$ the observables $a$ respectively $a$ and $b$ are reproduced without error.

\begin{figure}[h!]
\includegraphics[width=\textwidth]{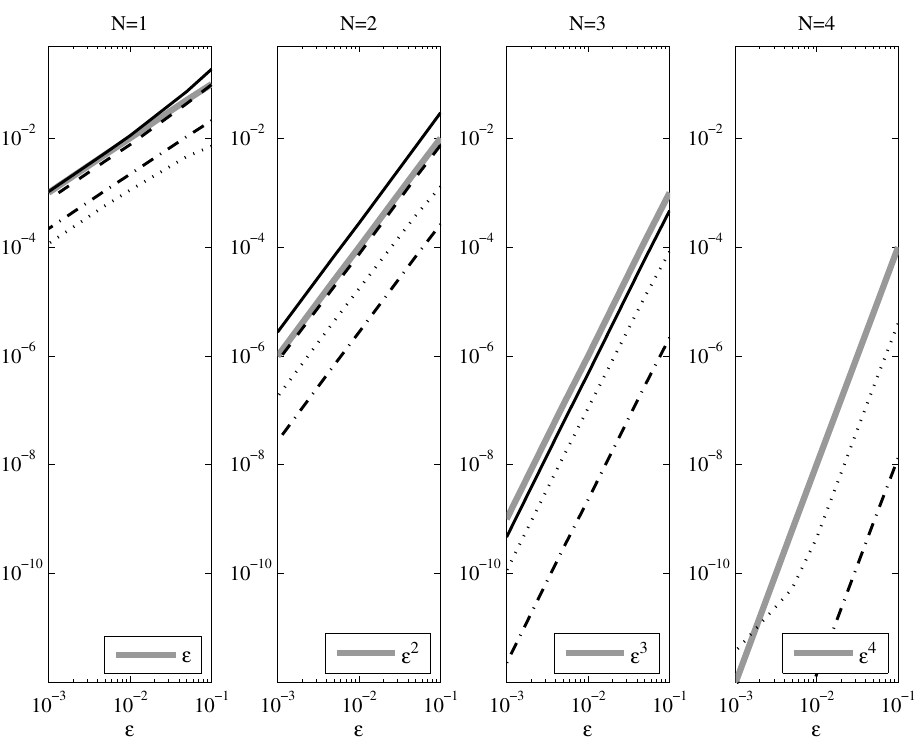}
\caption{\label{fig:gauss_error} Errors of the expectation values of the observables $a$ (dashed), $b$ (solid), $c$ (dashed dotted) and $d$ (dotted)
computed by the  spectrogram approximations of order $N\in\{1\hdots 4\}$. The state is a 
Gaussian wave packet centered in $(\tfrac12,-1)$.
}
\end{figure}

We highlight that the error constants do not seem to grow with the order,
although $\mu^N_\psi$ only weakly approximates $\W_\psi$. This indicates that 
stronger types of convergence might hold for particular states and observables.

\subsection{Sampling a hat function}\label{sec:hat}

In a second set of experiments we consider the normalized semiclassical hat function
\begin{equation}\label{eq:hat}
\psi(x) = \sqrt{\tfrac3{2\sqrt{\eps}}} \(1-\frac{|x-q|}{\sqrt\eps}\) 1_{|x-q|<\sqrt{\eps}}, \quad x\in\R,
\end{equation}
that is localized around $q \in \R$. Computing the Wigner functions and spectrograms of $\psi$
explicitely is difficult. Therefore, we sample from the densities $\mu_\psi^N$ by means of the 
Markov chain Monte Carlo algorithm introduced in \S\ref{sec:MCMC},
and discretize the inner prduct~\eqref{eq:hermite_spec_quad} by Quasi-Monte Carlo
quadrature with $10^3$ Sobol points. 

In figure~\ref{fig:hat}  one can see that the numerically computed Wigner function and its approximative reconstruction via the weighted histogram
\begin{equation}\label{eq:weighted_hist}
z \mapsto \frac1n \sum_{j=0}^2 (-1)^jC_{2,j}\#\{k: z^j_k \approx z\}, \quad z^j_1,\hdots, z^j_n \sim S_\psi^{\ph_j}
\end{equation}
of the signed density $\mu^3_\psi$ look very similar. In fact, the weighted histogram attains negative
values in the same regions where also the Wigner functions becomes negative.

\begin{figure}[h!]
\includegraphics[width=0.48\textwidth]{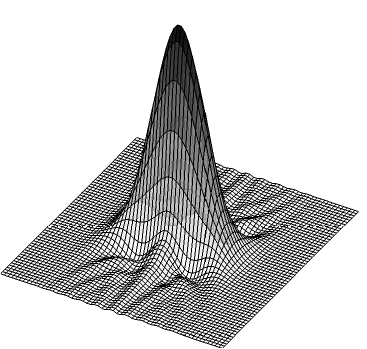}
\includegraphics[width=0.48\textwidth, height=0.4\textwidth]{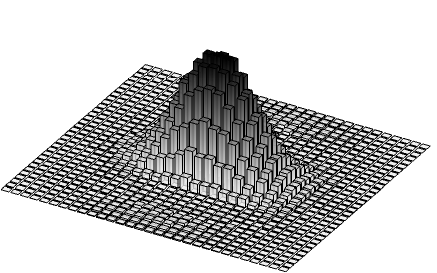}
\caption{\label{fig:hat} The Wigner function of the hat function~\eqref{eq:hat}
and a weighted histogram reconstruction~\eqref{eq:weighted_hist} of the spectrogram 
density~$\mu^3_\psi$ with $10^6$ samples, where $\eps = 5\cdot 10^{-2}$.
}
\end{figure}

In order to investigate the applicability of the Markov chain sampling algorithm from \S\ref{sec:MCMC}
we now explore the errors for observables in dependence
of the chosen number of Monte Carlo points.  
We consider the expectations of the position observable $a(q,p)=q$, that is, the center of the sampled distribution,
as well as of the complicated observable $d$ from the previous section. We consider samplings of both a single Husimi function and
the second order spectrogram density $\mu_\psi^{2}$ for the fixed parameter $\eps = 10^{-2}$.

\begin{figure}[h!]
\includegraphics[width=\textwidth]{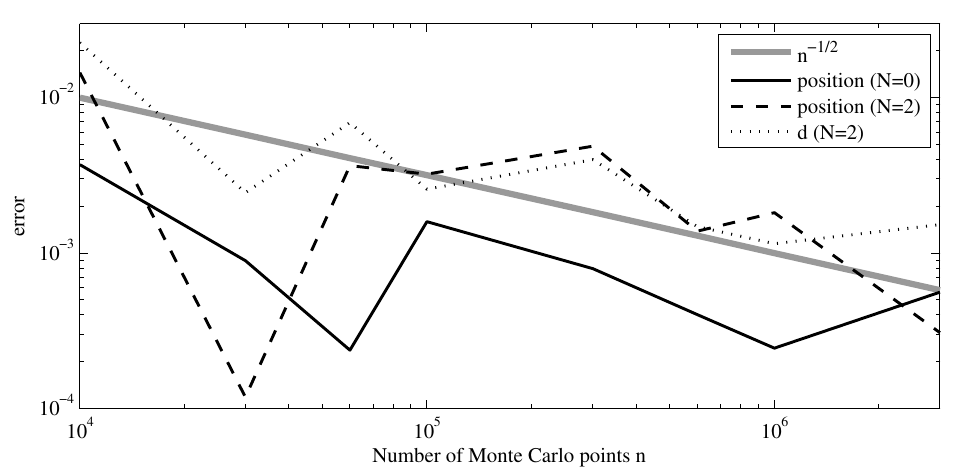}
\caption{\label{fig:hat_errors} Errors for the sampling method from~\S\ref{sec:MCMC} in dependence of the used Monte Carlo points
for the position in the case $N=0$, the position in the case $N=2$, and the observable $d$ in the case $N=2$.
The results are averaged over ten independent runs.
}
\end{figure}

Figure~\ref{fig:hat_errors} illustrates that, as expected, the asymptotic sampling error of order $O(n^{-1/2})$ for Markov chain Monte Carlo methods is
also observed for our algorithm, although the probability densities are only approximately evaluated via quadrature. 
We note that it is necessary to use a sufficiently accurate quadrature in order to observe decent convergence results.

Our experiments confirm that the Markov chain method from \S\ref{sec:MCMC} is applicable for 
the approximative sampling of Wigner functions 
in the semiclassical regime. The method could prove
particularly useful in higher dimensions, where Wigner functions typically cannot be computed.

\section*{Acknowledgements}

It is a pleasure to thank Caroline Lasser for many valuable
remarks and suggestions that helped
to improve our manuscript.
The author gratefully acknowledges support by the 
German Research Foundation (DFG), 
Collaborative Research Center SFB-TRR 109.

\begin{appendix}

\section{Proof of Proposition~\ref{lem:1d_gauss_lagu}}\label{app:proof_lem}

\begin{proof}[Proof of Proposition~\ref{lem:1d_gauss_lagu}]
We prove the assertion by induction. Since $L_0 \equiv1$, the base case $N=0$ 
is clear and we
assume that the assertion is true for some $N\in \N$. We compute
\begin{align*}
\nabla \W_{g_0}(z) &= -\tfrac2\eps z \W_{g_0}(z) & 
\nabla L_n(\r(z)) &= \tfrac4\eps z L_n'(\r(z))    \\
\Delta \W_{g_0}(z) &= \tfrac4\eps (\tfrac{\r(z)}2 -1) \W_{g_0}(z) & 
\Delta L_n(\r(z)) &= \tfrac4\eps \(  2L_n'(\r(z))  + 2\r(z) L''_n(\r(z))  \),
\end{align*}
and from now on write $ \r(z)=\r$ for simplicity. One has
\begin{align}
\tfrac{\eps}2 &\nn \Delta  (\W_{g_0}(z)L_n(\r)) = \W_{g_0}(z) \lk L_n(\r)(\r-2) + 4L_n'(\r) + 4\r L_n''(\r)  - 4 \r L_n'(\r)\rk\\
&\label{eq:laguerre_gauss_laps}= -  \W_{g_0}(z) \lk L_n(\r)(2-\r) -4 L_n'(\r) - 4\r L_n''(\r)  +  4\r L_n'(\r)\rk
\end{align}
and, hence, the polynomial factor in~\eqref{eq:laguerre_gauss_laps} can be rewritten as
\begin{align}
L_n(\r)(2-\r) &\nn - 4\r L_n''(\r)  +  4(\r-1) L_n'(\r)= L_n(\r)(2-\r+4n)\\
&\label{eq:Laguerre_reformu}=L_n(\r)(1+2n) + (n+1)L_{n+1}(\r) + nL_{n-1}(\r).
\end{align}
For verifying~\eqref{eq:Laguerre_reformu} one combines Laguerre's differential equation
\[
x L_n''(x) = (x-1)L_n'(x) - nL_n(x),
\]
and the three-term recurrence relation
\[
(n+1)L_{n+1}(x) =(2n +1 -x)L_n(x) - nL_{n-1}(x) ,\quad n \in \N,
\]
where $L_0\equiv 1$ and $L_{-1}\equiv 0$. Consequently, by  
the induction hypothesis and~\eqref{eq:Laguerre_reformu},
\begin{align}
&\nn\( -\tfrac{\eps}2 \Delta \)^{N+1}   \W_{g_0}(z)  =  N!
\sum_{n=0}^N {N \choose n}  \( -\tfrac{\eps}2 \Delta \)( \W_{g_0}(z)   L_n(\r) ) \\
&\label{eq:sum_laplace_induct}= N!  \W_{g_0}(z) 
\sum_{n=0}^N {N \choose n}  \lk L_n(\r)(1+2n) + (n+1)L_{n+1}(\r) + nL_{n-1}(\r) \rk
\end{align}
and we have to count the prefactors of the polynomials $L_n(\r)$ for
$n=0,\hdots,N+1$  in the sum.
For $L_{N+1}(\r)$ we have the prefactor
\[
(N+1) L_{N+1}(\r) = (N+1) {N+1 \choose N+1} L_{N+1}(\r)
\]
from the $N$th summand in~\eqref{eq:sum_laplace_induct}. For $L_N(\r)$ we get contributions
from the $N$th and the $(N-1)$th summand, and observe
\[
\( (1+2N) + N {N \choose N-1}  \) L_N(\r)  =   (N+1) {N+1 \choose N} L_N(\r).
\]
For all $1\leq n \leq N-1$ we get contributions from the $n$th, the $(n+1)$th, and the $(n-1)$th summand.
Combining them yields
\begin{align*}
&\({N \choose n-1}n  +{N \choose n}(1+2n) + {N \choose n+1}(n+1)\) L_n(\r) \\
&=  \(  n{N+1 \choose n} + (n+1) {N+1 \choose n+1}  \) L_n(\r) =   (N+1) {N+1 \choose n} L_n(\r),
\end{align*}
and for $L_0$ we again have the prefactor $(N+1)$. Finally, rewriting~\eqref{eq:sum_laplace_induct} 
as a sum over $L_n(\r)$ with $n=0,\hdots,N+1$ completes the proof.
\end{proof}

\section{Binomial identities}\label{sec:proof_binom}

We summarize some binomial identities we repeatedly employ in 
our proofs.
By applying Pascal's identity multiple times one directly obtains the formula
\begin{align}
\sum_{j=0}^N {k+j \choose j} &= {k+N+1 \choose N}.
\end{align}
Furthermore, for all $N,m,k\in N$ one has
\begin{equation}\label{eq:binom_sum_prod1}
\sum_{j=0}^{N}{N+m-j \choose m} {k+j \choose j} = {N+m+k+1 \choose N}.
\end{equation}
For the proof of~\eqref{eq:binom_sum_prod1} we use generating functions, and set
\[
a_N = {N+m \choose N}, \quad b_N = {N+k \choose N}, \quad c_N:= \sum_{j=0}^{N}{N+m-j \choose m} {k+j \choose j}
\]
such that
\begin{equation}\label{eq:power_series_binom}
\sum_{j\geq 0}a_j x^j = (1-x)^{-(m+1)}, \quad \sum_{j\geq 0}b_j x^j = (1-x)^{-(k+1)}
\end{equation}
for all $|x|<1$. Then, $c_N$ is the $N$th term in the Cauchy product of $(a_j)_{j\geq0}$
and $(b_j)_{j\geq 0}$, and hence
\[
\sum_{j\geq 0}c_j x^j = (1-x)^{-(m+k+2)}.
\]
Comparing the coefficients with the power series~\eqref{eq:power_series_binom} implies the assertion.

\end{appendix}

\providecommand{\noopsort}[1]{}\providecommand{\singleletter}[1]{#1}%

\end{document}